\documentclass[11pt]{amsart}
\pdfoutput=1
\usepackage{amsmath}
\usepackage{mathtools}
\usepackage{tikz}
\usetikzlibrary{arrows}
\usepackage{graphicx}
\usepackage{caption}
\usepackage{subcaption}
\usepackage{multicol}

\definecolor{sepia}{HTML}{704214}

\usepackage{array}
\newcolumntype{C}{>{\centering\arraybackslash}m{8pt}}
\usepackage{multirow}
\newtheorem{theorem}{Theorem}
\newtheorem{lemma}[theorem]{Lemma}
\newtheorem{corollary}[theorem]{Corallary}
\newtheorem{gadget}[theorem]{Gadget}
\newcommand*\cov[1]{\overline{#1}}

\title[Complexity of SCSN \& SCSNW]{Approximation hardness of Shortest Common Superstring variants}
\author{Y. William Yu}
\address[Y. William Yu] {Department of Mathematics\\
    Massachusetts Institute of Technology\\
Cambridge, Massachusetts 02139}
\email[Y.W.~Yu]{ywy@mit.edu}
\thanks{Research supported by a Hertz Foundation Fellowship.}

\date{\today}

\subjclass{Primary 68Q17, 68W32; Secondary 92B05}

\begin{document}
\begin{abstract}
    The shortest common superstring (SCS) problem has been studied at great length because of its connections to the \textit{de novo} assembly problem in computational genomics. The base problem is APX-complete, but several generalizations of the problem have also been studied. In particular, previous results include that SCS with Negative strings (SCSN) is in Log-APX (though there is no known hardness result) and SCS with Wildcards (SCSW) is Poly-APX-hard. Here, we prove two new hardness results: (1) SCSN is Log-APX-hard (and therefore Log-APX-complete) by a reduction from Minimum Set Cover and (2) SCS with Negative strings and Wildcards (SCSNW) is NPOPB-hard by a reduction from Minimum Ones 3SAT.
\end{abstract}
\maketitle

\section{Introduction}
Given a set of strings $s_1,\ldots, s_n$, the Shortest Common Superstring optimization problem (SCS) is to minimize $N$ so that there exists a string $S$ of length $N$ such that all $s_i$ are substrings of $S$.
SCS and its variants are closely related to the assembly problem in computational genomics \cite{medvedev2007computability}; i.e. piecing together a full genome from small fragments, though redundancy in the genome implies that the correspondence is not perfect. Note that this not to be confused with Shortest Common Supersequence, which deals with subsequences instead of substrings, and can be related to the alignment problem in genomics \cite{raiha1981shortest}.

SCS was proven in 1994 to be APX-hard from the traveling salesman problem (TSP) using a $2n+1$ length alphabet, and a 3-approximation algorithm was given \cite{blum1994linear}.
Since then, a string of algorithmic advances have brought the approximation ratio down to $\frac{57}{23}$ \cite{mucha2013lyndon,golovnev2013approximating}, and inapproximability results have shown that the minimium is $\frac{333}{332}$ \cite{karpinski2013improved}.
Additionally, Ott showed in 1999 that SCS is APX-hard even when the alphabet is restricted to size 2 by a reduction from TSP with all distances either 1 or 2 \cite{ott1999lower}.

Although base SCS is thus fairly well-characterized as APX-complete, several generalizations have also been studied in the literature (Table \ref{tab:prevknown}).
In particular, allowing for negative strings (which are not allowed in the superstring) and wildcards seem to increase the difficulty of the problem.
Shortest Common Superstring with Negative strings (SCSN) can be approximated to within a logarithmic factor \cite{jiang1994approximating} using the Group-Merge algorithm \cite{li1990towards}, but no comparable hardness result has been shown in the literature.
Shortest Common Superstring with Wildcards (SCSW) on the other hand is known to be Poly-APX-hard by reduction from minimum chromatic number \cite{ma2009greed}.
Nothing is known about the combination of the two, Shortest Common Superstring with Negative strings and Wildcards (SCSNW).

\begin{table}[tbp]
    \begin{center}
        \begin{tabular}{| c | c | c | }
            \hline
            &   \textbf{No Negative Strings}    & \textbf{Negative Strings} \\ \hline
            \textbf{No Wildcards} &   SCS: APX-complete    &   SCSN: in Log-APX \\ \hline
            \textbf{Wildcards} &     SCSW: Poly-APX-hard   &   SCSNW: {\color{red}???} \\ \hline
        \end{tabular}
    \caption{Previously known results for variants of SCS.}
    \label{tab:prevknown}
    \end{center}
\end{table}

\begin{table}
    \centering
    \begin{tabular}{| c | c | c | }
        \hline
        &   \textbf{No Negative Strings}    & \textbf{Negative Strings} \\ \hline
        \textbf{No Wildcards} &   SCS: APX-complete    &  SCSN: {\color{blue}Log-APX-complete} \\ \hline
        \textbf{Wildcards} &     SCSW: Poly-APX-hard   &  SCSNW: {\color{blue}NPOPB-hard} \\ \hline
    \end{tabular}
    \caption{Updated table of approximability results for variants of SCS with new results presented in this paper in blue.}
    \label{tab:nowknown}
\end{table}

In this paper, we first briefly review existing reductions for proving APX-hardness and Poly-APX-hardness of SCS and SCSW respectively. Building on insights and strategies from those reductions, we then prove two new hardness results: (1) SCSN is Log-APX-hard and (2) SCSNW is NPOPB-hard.

\section{Reductions Review}
In this section we review reductions for SCS and SCSW. We omit many details, as we are interested only in highlighting some of the gadgets and reduction strategies that we will be using later.

\subsection{SCS reduction from $O(1)$-degree vertex cover \cite{vassilevska2005explicit}}
Given a set of strings $s_1,\ldots, s_n$, SCS wants to minimize $N$ so that there exists a string $S$ of length $N$ such that all $s_i$ are substrings of $S$.
Although the original APX-hardness reduction for SCS was from a variant of the Traveling Salesperson Problem \cite{blum1994linear}, we review here (in brief, skipping many details) a more recent reduction from $O(1)$-degree vertex cover \cite{vassilevska2005explicit}, as our new reductions build on several of the ideas.

We start with instance of vertex cover $G = (V,E)$ with $|V| = n$ and $|E| = m$.
Let the alphabet $\Sigma = V$ so each vertex $a$ is associated with a single letter $a$.
Let an edge $(a,b)$ be represented by strings $abab$ and $baba$.
Suppose $G$ has a vertex set $S$ of size $k$. Assign every edge $(a,b)$ to its covering vertex (or arbitrarily if both vertices are in $S$). Then if $a$ is the assigned vertex for the edge $(a,b)$, overlap the two strings to get $ababa$, else overlap the other way to get $babab$.
Then for every $c \in S$, we can overlap all assigned edge strings by 1 to get $ca_1ca_1ca_2ca_2c\ldots ca_{k_c}ca_{k_c}c$ of length $4k_c +1$, where $k_c$ is the number of edges assigned to $c \in S$.
By concatenating all such strings together, we get a superstring of length $4m+k$ (Figure \ref{fig:scs}).

Conversely, it can be shown that all superstrings for the SCS problem are of length $4m + t$, and can be shortened in polynomial time to a string corresponding to a vertex cover. Thus, if we can get a superstring of length $4m + k$ for SCS, we can get a vertex cover of size $\le k$. Making use of exact bounds from the $O(1)$-vertex cover problem, it is possible to show SCS is APX-hard.

We will reuse two of the gadgets later in the SCSN Log-APX-hardness proof:
\begin{enumerate}
    \item Overlapping strings in two different ways for each edge to select which vertex covers that edge.
    \item Creating vertex strings by overlapping edge strings, such that each additional vertex used contributes 1 to the final cost.
\end{enumerate}

\begin{figure}[tb]
    \centering
    \begin{subfigure}[b]{0.39\textwidth}
        \begin{tikzpicture}[->,>=stealth',shorten >=1pt,auto,node distance=3cm,
                            thick,main node/.style={circle,draw,font=\sffamily\Large\bfseries}]

          \node[main node] (1) {1};
          \node[main node, fill=blue!30] (2) [below left of=1] {2};
          \node[main node] (3) [below right of=2] {3};
          \node[main node, fill=blue!30] (4) [below right of=1] {4};

          \path[every node/.style={font=\sffamily\small}]
            (1) edge node [left] {1414} (4)
                edge node[left] {1212} (2)
            (2) edge node [right] {2121} (1)
                edge node {2424} (4)
                edge node[left] {2323} (3)
            (3) edge node [right] {3232} (2)
                edge node[right] {3434} (4)
            (4) edge node [left] {4343} (3)
                edge node[right] {4141} (1)
                edge node {4242} (2);
        \end{tikzpicture}
    \end{subfigure}
    \begin{subfigure}[b]{0.6\textwidth}
        \small
        \begin{itemize}
            \item If $S = \{2,4\}$, then edges collapse to 
                \begin{itemize}
                    \item 21212
                    \item 23232
                    \item 24242
                    \item 41414
                    \item 43434
                \end{itemize}
            \item Then the two vertices $2$ and $4$ are associated with strings
                \begin{itemize}
                    \item 2121232324242
                    \item 414143434
                \end{itemize}
            \item Which results in a final string
                \begin{itemize}
                    \item 2121242423232 414143434
                \end{itemize}
        \end{itemize}
    \end{subfigure}
    \caption{SCS reduction from $O(1)$-degree vertex cover example.}
    \label{fig:scs}
\end{figure}

\subsection{SCSW reduction from minimum chromatic number \cite{ma2009greed}}
Given set of strings $s_1, \ldots, s_n$ with letters from $\Sigma \cup \{?\}$ find the shortest string $S$ with letters from $\Sigma$ that is a superstring of all $s_i$, where each $?$ can match any letter of $\Sigma$.
For genomics, this corresponds to uncertainty in sequencer calls for particular bases in a DNA read.

We start with a minimum chromatic number problem on graph $G = (V,E)$ with $V = \{v_1, v_2, \ldots, v_n\}$ and $E = \{e_1, e_2, \ldots, e_m\}$.
Let $\Sigma = \{A, T, G, C\}$. For each $v_i$, let $t_i$ be a string of length $m$ such that 
        \[ t_i[k] = \left\{
        \begin{array}{l l}
            A, & \textrm{if $e_k = (v_i, v_j)$ and $i < j$} \\
            T, & \textrm{if $e_k = (v_j, v_i)$ and $j < i$} \\
            ?, & \textrm{otherwise.} 
        \end{array}  \right.   \]
Then let $s_i = Xt_iX, \hspace{1em} \forall i \in [1,n]$ where $X = G^{mn}C^{mn}$, be the SCSW instance.
By construction, independent sets can completely overlap with one another. As each color in a coloring corresponds to an independent set, superstrings have length proportional to the minimum chromatic number (exactly $2mn + m (2n +1)k$, where $k$ is the chromatic number).
Any superstring of the SCSW problem can be polynomially shortened to be of the form $XY_1XY_2X\ldots XY_k$.
Reconstructing the independent sets is then just matter of reading off the set edges in each string between the $X$ border markers.

Unlike the SCS reduction in the last section, this is an L-reduction \cite{crescenzi1997short} (or more precisely, after normalizing by $m(2n+1)$, it is an L-reduction). This is because each new color needed in min chromatic number corresponds to not just a single character, but instead an entire substring $XY_iX$'s worth.

We will reuse two of the strategies later:
\begin{enumerate}
    \item Using wildcards to allow collapsing together many input strings into a single section.
    \item Forcing each additional color to correspond to a long string so that we have an L-reduction.
\end{enumerate}
While the first strategy is only applicable to our SCSNW proof later, the second is used in both the SCSN and SCSNW reductions in the next section.

\begin{figure}[tb]
    \centering
    \begin{subfigure}[b]{0.39\textwidth}
        \begin{tikzpicture}[->,>=stealth',shorten >=1pt,auto,node distance=3cm,
                            thick,main node/.style={circle,draw,font=\sffamily\Large\bfseries}]

          \node[main node, fill=blue!30] (1) {1};
          \node[main node, fill=red!30] (2) [below left of=1] {2};
          \node[main node, fill=blue!30] (3) [below right of=2] {3};
          \node[main node, fill=yellow!30] (4) [below right of=1] {4};

          \path[every node/.style={font=\sffamily\small}]
            (1) edge node [left] {2} (4)
                edge node[left] {1} (2)
            (2) edge node [right] {} (1)
                edge node {4} (4)
                edge node[left] {3} (3)
            (3) edge node [right] {} (2)
                edge node[right] {5} (4)
            (4) edge node [left] {} (3)
                edge node[right] {} (1)
                edge node {} (2);
        \end{tikzpicture}
    \end{subfigure}
    \begin{subfigure}[b]{0.6\textwidth}
        \begin{itemize}
            \item Strings in SCSW are 
                \begin{itemize}
                    \item $s_1$ = \texttt{{\color{green}X}{\color{blue}AA???}{\color{green}X}}
                    \item $s_2$ = \texttt{{\color{green}X}{\color{red}T?AA?}{\color{green}X}}
                    \item $s_3$ = \texttt{{\color{green}X}{\color{blue}??T?A}{\color{green}X}}
                    \item $s_4$ = \texttt{{\color{green}X}{\color{sepia}?T?TT}{\color{green}X}}
                \end{itemize}
            \item Vertices $s_1$ and $s_3$ can merge due to independence
                \begin{itemize}
                    \item \texttt{{\color{green}X}{\color{blue}AAT?A}{\color{green}X}}
                \end{itemize}
            \item Which results in a final string
                \begin{itemize}
                    \item \texttt{{\color{green}X}{\color{blue}AAT?A}{\color{green}X}{\color{red}T?AA?}{\color{green}X}{\color{sepia}?T?TT}{\color{green}X}}
                \end{itemize}
        \end{itemize}
    \end{subfigure}
    \caption{SCSW reduction from minimum chromatic number example.}
    \label{fig:scsw}
\end{figure}

\section{New hardness results}
\subsection{SCSN reduction from minimum set cover}
\begin{theorem}[SCSN is Log-APX-hard]
    Given a set of strings $s_1, \ldots, s_\eta$ and a set of negative strings $t_1, \ldots, t_{poly(\eta)}$, both built from an alphabet $\Sigma$, optimizing for the shortest string $T$ that is a superstring of all $s_i$ but contains no $t_j$ as a substring is Log-APX-hard.
    \label{thm:SCSN}
\end{theorem}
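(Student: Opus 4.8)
The plan is to give an approximation-preserving reduction from Minimum Set Cover, which is Log-APX-complete; we may assume the number of sets $n$ is polynomially bounded in the universe size $m$, since the standard inapproximability instances have this property. Given a universe $\{u_1,\dots,u_m\}$ and sets $S_1,\dots,S_n$, put $L=2m$ and build the SCSN instance over an alphabet containing, for each set $S_i$, a private block $\Sigma_i$ of $L$ fresh letters plus one further private letter $\sigma_i$, and, for each element $u_k$, a private letter $e_k$; let $R_i$ be a fixed word spelling out $\Sigma_i$ with no letter repeated (so $R_i$ has no nontrivial self-overlap and $|R_i|=L$). The positive strings are just the singletons $p_k=e_k$. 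The negative strings encode the following local rules: inside $R_i$ every letter must be immediately preceded by its predecessor in $R_i$; $\sigma_i$ may be immediately preceded only by the last letter of $R_i$ or by some $e_k$ with $u_k\in S_i$; and an occurrence of $e_k$ may be flanked only by a set letter $\sigma_i$ with $u_k\in S_i$, and must have the \emph{same} $\sigma_i$ on both of its sides. The intent is that a legal superstring falls apart into ``blocks'' of the form $R_i\,\sigma_i e_{k_1}\sigma_i e_{k_2}\sigma_i\cdots e_{k_r}\sigma_i$ with $u_{k_1},\dots,u_{k_r}\in S_i$, where the first two rules force each such block to carry a full length-$L$ copy of $R_i$. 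Using a marker of length $\Theta(m)$ rather than a single symbol (as in the vertex-cover reduction) is precisely the ``long string per used object'' trick from the SCSW reduction, here needed so the per-set cost dominates and approximation ratios transfer up to constants; the freedom to drop each $e_k$ into the block of any set containing $u_k$ is the analogue of the two-way overlap choice for edges.

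For completeness, given a cover $\{S_{i_1},\dots,S_{i_t}\}$, assign every element to one covering set taken from it, build the corresponding block for each $S_{i_j}$ by gluing the letters of its assigned elements together with copies of $\sigma_{i_j}$, and concatenate these $t$ blocks. The result contains every $p_k=e_k$ and, by direct inspection, contains no negative string; its length is $tL+2m+t$. Hence $\mathrm{OPT}_{\mathrm{SCSN}}\le(\tau+1)(2m+1)$, where $\tau$ is the minimum cover size.

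For soundness, let $T$ be any legal superstring. Walking leftward from an occurrence of a letter $e_k$, the negative rules force one through an alternating run of one fixed set letter $\sigma_i$ and element letters, all belonging to $S_i$, with $u_k\in S_i$; this run must terminate at a full copy of $R_i$ unless it runs off the left end of $T$. Consequently every $e_k$ sits inside a block whose set contains $u_k$, and every such block except possibly the one touching position $1$ carries a complete marker $R_i$. Call $S_i$ activated if $R_i$ occurs in $T$. Because distinct $R_i$ use disjoint private alphabets and each $R_i$ has no self-overlap, the marker occurrences of the activated sets are pairwise disjoint, so there are at most $|T|/L$ of them; throwing in the at most one extra set of the leftmost block gives, in polynomial time, a set cover of size at most $|T|/L+1$. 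In particular $\tau\le\mathrm{OPT}_{\mathrm{SCSN}}/(2m)+1$, so $\mathrm{OPT}_{\mathrm{SCSN}}\ge 2m(\tau-1)$, and any $\rho$-approximate superstring yields a cover of size $O(\rho\tau)$, i.e.\ an $O(\rho)$-approximation for Minimum Set Cover. Since the SCSN instance has size polynomial in $m$ and Minimum Set Cover has no $o(\log)$-approximation unless $\mathrm P=\mathrm{NP}$, neither does SCSN; together with the known Group-Merge logarithmic approximation this yields Log-APX-completeness.

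The part I expect to fight with is calibrating the negative strings so the block decomposition in the soundness step is genuinely inescapable: element letters must be pinned strictly between two copies of the \emph{same} set letter, each run of $\sigma_i$'s must be capped by a full $R_i$, and one must rule out any cleverer superstring. The only leakage I foresee --- a ``headless'' leftmost block and a bounded number of boundary occurrences --- must be shown to cost only an additive constant in the recovered cover, which is exactly why $L=\Theta(m)$ is chosen so that both the $\Theta(m)$ total length contributed by all the $e_k$ and the $O(1)$ slack are dominated by the per-set marker cost. A secondary, routine point is checking that the superstring built from a cover avoids all (polynomially many) negative strings.
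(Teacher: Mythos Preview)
Your reduction is correct and, like the paper's, goes through Minimum Set Cover with the same high-level idea: make each positive string a single element letter, use negative strings to force every element to sit inside a ``set block,'' and attach a long ($\Theta(m)$ or longer) marker to each block so that the number of blocks dominates the length and approximation ratios transfer. The execution, however, is genuinely different.

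The paper's Item Gadget wraps each element letter $i$ in a length-$(3n{+}3)$ cyclic run $\cov{j}i\cov{j}\,\cov{(j{+}1)}i\cov{(j{+}1)}\cdots$ through \emph{all} sets, with the rotation choosing the covering set; items assigned to the same set then overlap by one symbol. The per-set penalty is not baked into the block itself but is enforced afterwards by a separate filler $l_0 l^{m(3n+2)}$ squeezed between consecutive set gadgets, and begin/end markers $b,e$ kill boundary effects. Your construction is more elementary: the choice of covering set is recorded locally by $\sigma_i e_k \sigma_i$, the per-set penalty is a private length-$2m$ word $R_i$ at the head of each block, all negative strings have length at most~$3$, and instead of framing you absorb the one possible ``headless'' leftmost block as a $+1$ in the recovered cover. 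The trade-offs are that your alphabet has size $\Theta(mn)$ versus the paper's $m+n+4$ (relevant if one then pushes to a binary alphabet as the paper does), while your gadgetry is shorter and the soundness argument --- walk left from any $e_k$ through an alternating $\sigma_i e_{k'} \sigma_i$ run until a full $R_i$ or position~$1$ --- is cleaner than tracking the paper's rotation constraints. Both yield an approximation-preserving reduction and hence Log-APX-hardness; neither is a strict L-reduction without the normalisation step both of you invoke.
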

We will prove this theorem by reduction from min set cover \cite{lund1994hardness}, but will first need some setup.
Our strategy for this reduction will be to use the negative strings to force certain structural conditions.
Let the set cover problem be to cover the set of items $S = \{1, \ldots, m\}$ by sets $\cov{1}, \ldots, \cov{n} \in C$, $\cov{i} \subset S$.
Let the alphabet $\Sigma = S \cup C \cup \{l_0, l, b, e\}$.
We introduce the additional letters $b, e$ (begin and end) to frame the string to remove border effects, and the additional letters $l_0, l$ to force \textit{long} gaps after certain patterns.
For the reduction, let the input positive strings be $\{i \in S\} \cup \{b, e\}$, so we require that each item letter appear at least once and have a particular beginning and end.

\begin{lemma}
    For any string $X \in T$, we can disallow arbitrary prefixes and suffixes of bounded length $AXB$ in polynomial time.
    \label{lem:dispresuf}
\end{lemma}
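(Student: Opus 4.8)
The plan is to enforce the constraint by brute force, exploiting the fact that a negative string simply forbids one specific substring: to kill every \emph{bad} occurrence of the given pattern $X$ it suffices to list all the bad patterns explicitly as negative strings. Concretely, writing $d$ for the (constant) length bound in the statement, for every prefix $A\in\Sigma^{\le d}$ and every suffix $B\in\Sigma^{\le d}$ with $|A|+|B|\ge 1$ I would add $AXB$ to the collection of negative strings of the instance. (Keeping at least one of $A,B$ nonempty guarantees we never accidentally forbid $X$ itself, which may be a required positive string; and if one only wishes to rule out a single particular $AXB$, that one insertion already does it.) A candidate $T$ is then feasible for the enlarged instance precisely when it is feasible for the original one \emph{and} no copy of $X$ inside $T$ is flanked by a nonempty context of length at most $d$ on either side --- exactly the condition the lemma asserts we can impose.

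The one point that genuinely requires an argument is that this enlargement is polynomial. First I would bound the number of new negative strings by $\bigl(\sum_{i=0}^{d}|\Sigma|^{i}\bigr)^{2}\le |\Sigma|^{2d+2}$; since here $|\Sigma|=|S|+|C|+4=m+n+4$ and $d$ is a fixed constant, this is $O\bigl((m+n)^{2d+2}\bigr)$ strings, each of length $|X|+O(d)$, so the construction runs in polynomial time and keeps the total number of negative strings within the $poly(\eta)$ budget of Theorem~\ref{thm:SCSN}, provided the lemma is invoked only a polynomial number of times. Correctness is then immediate from the definition of SCSN: adjoining negative strings can only delete feasible superstrings, and it deletes exactly those that contain one of the listed patterns as a substring.

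I do not expect a real obstacle within the scope of this lemma itself --- the content is essentially that negative strings compose freely and that a bounded-length context set over a polynomially large alphabet is polynomially large. The only subtlety worth flagging belongs to the eventual \emph{use} of the lemma, not its proof: consistency. When we later apply it to pin down border behaviour (forcing $b$ to be the first letter and $e$ the last, or forcing a long $l$-gap after a designated pattern), we will have to verify that the superstring built from an intended set cover avoids every pattern we have forbidden, so that the completeness direction of the reduction survives; that verification will be done at the point of application. Here it is enough to record the tool and its polynomial cost.
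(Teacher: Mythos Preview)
Your proposal is correct and matches the paper's own argument almost exactly: enumerate all strings of the form $AXB$ over the alphabet with $|A|,|B|$ bounded, add each as a negative string, and observe that the count $|\Sigma|^{|A|+|B|}$ (hence the total construction time) is polynomial because the context length is constant while $|\Sigma|$ is polynomial in the instance size. The paper's proof is just the one-line version of what you wrote; your extra care about not forbidding $X$ itself and about the $poly(\eta)$ budget is fine but not something the paper spells out.
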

\begin{proof}
    The total number of possible strings of the form $AXB$ is $|\Sigma|^{|A| + |B|}$, and listing them all out as negative strings takes $O((|A|+|X|+|B|)\cdot|\Sigma|^{|A| +|B|}$ time, which is polynomial.
    For ease of notation, we will use ``$?$'' as a ``wildcard'' symbol where applicable.
\end{proof}

\begin{gadget}[Frame Gadget]
    Forces $T = b?\cdots?e$.
    \label{gad:frame}
\end{gadget}
\begin{proof}[Design]
Disallow $?b$ and $e?$. Then there can be no letters left of $b$ or right of $e$ because then there would be a disallowed substring.
For the remainder of this section, unless explicitly noted otherwise, we will not consider $b$ and $e$ valid characters for substrings, since they must be unique and at predetermined locations.
\end{proof}

\begin{lemma}
    For any string $X \in T$, prefix length $u$, and suffix length $v$, we can force $X$ to extend to a string $AXB$ with $|A|=u, |B|=v$ such that $AXB$ is drawn from a specified set $T(X)$ in polynomial time.
\end{lemma}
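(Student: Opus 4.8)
The plan is to reduce the statement to Lemma~\ref{lem:dispresuf} by brute-force enumeration of the forbidden contexts of $X$. Given $X$, the prefix and suffix lengths $u,v$, and the target set $T(X)$, I would list every string of the form $AXB$ with $A,B\in(\Sigma\setminus\{b,e\})^{*}$, $|A|=u$, $|B|=v$, and $AXB\notin T(X)$, and add each one as a negative string. There are at most $|\Sigma|^{u+v}$ such strings, each of length $u+|X|+v$, so as long as $u+v$ is bounded (equivalently, as long as $|\Sigma|^{u+v}$ is polynomial in the instance size, which holds here since $u,v$ are constants in the reduction), this list is computed in polynomial time, exactly as in Lemma~\ref{lem:dispresuf}.

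For correctness on ``interior'' occurrences: in any superstring $T$ that avoids all the negative strings, take an occurrence of $X$ that has at least $u$ symbols of $\Sigma\setminus\{b,e\}$ immediately to its left and at least $v$ immediately to its right, and let $A,B$ be those length-$u$ and length-$v$ windows. Were $AXB\notin T(X)$, then $AXB$ would be one of the strings we forbade, contradicting the validity of $T$; hence $AXB\in T(X)$, so this occurrence extends as claimed. Since $X$ is a positive input string it occurs at least once, so a valid $T$ genuinely contains such an extension.

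The one delicate point — and the part I expect to take actual care with — is excluding \emph{boundary} occurrences of $X$, i.e.\ occurrences crammed within $u$ symbols of the left end or $v$ symbols of the right end, for which the windows $A,B$ are unavailable (or would swallow a $b$ or $e$). Here I would invoke the Frame Gadget (Gadget~\ref{gad:frame}), which already pins the ends of $T$ to the unique markers $b,e$, and then additionally forbid, for every $j<u$, all strings $b\,w\,X$ with $w\in(\Sigma\setminus\{b,e\})^{j}$, and symmetrically all $X\,w\,e$ with $w\in(\Sigma\setminus\{b,e\})^{<v}$; this is only $O(|\Sigma|^{\max(u,v)})$ further negative strings. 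With these present, every occurrence of $X$ in a valid $T$ has a full left window of $u$ non-marker symbols and a full right window of $v$ non-marker symbols, so the previous paragraph applies verbatim. One should also note that the ambient reduction still admits some valid superstring — but since $u,v$ are constants while $|T|$ scales with the set-cover parameters there is always room, and this is a constraint on how the gadget is \emph{used} rather than on the construction itself.
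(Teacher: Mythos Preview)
Your proposal is correct and follows essentially the same strategy as the paper: invoke Lemma~\ref{lem:dispresuf} to enumerate and forbid all contexts $AXB$ of $X$ that do not lie in $T(X)$, with the Frame Gadget handling the string ends. The only structural difference is that the paper builds the forbidden set \emph{incrementally}---first forbidding every bad $?X?$, then every bad $??X??$, and so on out to the full $u,v$ window---and argues via Gadget~\ref{gad:frame} that this iteration ``cannot stop'' before reaching length $u+|X|+v$, whereas you enumerate the full-length bad windows in a single shot and then add explicit $b\,w\,X$ and $X\,w\,e$ negative strings (for $|w|<u$, resp.\ $|w|<v$) to rule out near-boundary occurrences. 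Both routes give the same polynomial bound $O((u+|X|+v)\cdot|\Sigma|^{u+v})$ on the number and size of negative strings and the same conclusion; your explicit boundary treatment is arguably cleaner than the paper's somewhat informal appeal to the iteration not terminating early.
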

\begin{proof}
    By Lemma \ref{lem:dispresuf}, we can list all strings of the form $?X?$ as negative strings in polynomial time.
    First, we list all strings of the form $?X?$ except those that match some center in $T(X)$ as negative strings.
    Then we iterate, building single characters onto prefix and suffix until we reach strings of form $AXB$.
    This takes polynomial time provided $u$ and $v$ as $u$ and $v$ are bounded.
    Additionally, because of the Frame Gadget, the iteration cannot stop until we reach $AXB$ because otherwise some other character would be the left-most or right-most in $T$.
\end{proof}

\begin{gadget}[Item Gadget]
    Extends the item string ``$i$'' to 
    \[
    \cov{j} i \cov{j} \cov{(j+1)} i \cov{(j+1)} \ldots \cov{n} i \cov{n} \cov{1} i \cov{1} \cov{2} i \cov{2} \ldots \cov{j} i \cov{j},
    \]
    for any choice of rotation $j \in \{1, \ldots, n\}$ for which $i \in \cov{j}$
\end{gadget}
\begin{proof}[Design]
Extend $?i?$ to $\cov{c}i\cov{c}$ where $\cov{c} \in C$. This forces every item to be surrounded on both sides by one of the sets in a triple.

For every $\cov{j} i \cov{j}$, disallow $??? \cov{j} i \cov{j} ???$ except
\begin{align*}
    \cov{(j-1)} i \cov{(j-1)} \cov{j} i \cov{j} ??? \hspace{1em} \textrm{or} \hspace{1em} ??? \cov{j} i \cov{j} \cov{(j+1)} i \cov{(j+1)} & \hspace{2em} \textrm{if } i \in \cov{j} \\
    \cov{(j-1)} i \cov{(j-1)} \cov{j} i \cov{j} \cov{(j+1)} i \cov{(j+1)} & \hspace{2em} \textrm{if } i \not \in \cov{j} 
\end{align*}
This forces every triple to be connected on at least one side to its consecutive triple, and buries in the middle triples corresponding to sets that do not cover $i$.

Then, for every string of $p$ triples 
\[X = \cov{j} i \cov{j} \ldots \cov{(j+p-1 \mod n)} i \cov{(j+p-1 \mod n}),\] for $p \in \{1, \ldots n\}$, disallow $??? X ???$ except for
\begin{align*}
    ??? X \cov{(j+1)} X \cov{(j+1)} \hspace{1em} \textrm{or} \hspace{1em} \cov{(j-1)} i \cov{(j-1)} X ???
\end{align*}
This forces these item gadgets to have at least $n+1$ triples.
To make sure they do not have more than $n+1$ triples, we just disallow strings with $n+2$ triples.
Thus, we have constructed our length $3n+3$ Item Gadget.
\end{proof}

\begin{gadget}[Set Gadget]
    Allows a $2+m(3n+2)$ penalty to be placed on the string length for each additional set needed in the cover, resulting in an L-reduction.
\end{gadget}
\begin{proof}[Design]
    As with the earlier SCS reduction, note that every item can be assigned to a particular set $\cov{j}$ for the cover by rotating the Item Gadget so that it starts and ends with $\cov{j}$.
    Then, adjacent items assigned to the same set can overlap by 1 character, so for a set $c$ with $k(c)>0$ assigned items, the set gadget will use up $k(c)\cdot(3n+2)+1$ characters.
    Alone, using the same arguments as in the SCS reduction, this would imply that the superstring uses up $2+m(3n+2)+k$ characters, for a set cover of size $k$.

    Unfortunately, the above is not an L-reduction as like in the SCSW reduction we need a multiplicative penalty.
    However, we can achieve that by forcing additional space between adjacent set gadgets.
    To do this, for every orientation of an item gadget $X$, disallow $X?$ except for $Xy$, where $y \in S \cup \{l_0, e\}$.
    Within a set gadget, the items overlap, so after an individual set gadget will be an item character, so this does not affect the internals of the set gadgets.
    However, after a set gadget, it must either be the end of the string $e$, or the character $l_0$.
    Now disallow $l_0 l^q ?$ except $l_0 l^q l$ for $q \in [0, m(3n+2)-1]$, forcing any substring starting with $l_0$ to have shape $l_0 l^{m(3n+2)}$ and thus length $1+m(3n+2)$.
    Thus, the space between adjacent set gadgets is thus $2+m(3n+2)$. For a set cover of size $k$, there are $k-1$ such spaces, so the final superstring will have length $k(2 + m(3n+2)) + 1$.
    By normalizing to $2 + m(3n+2)$, this implies that we have an L-reduction.
\end{proof}

\begin{proof}[Proof of Theorem \ref{thm:SCSN}]
    For any instance of min set cover, convert it to an instance of SCSN with alphabet size $n+m+2$ by the gadgets described in this section.
    As this is an L-reduction, and min set cover is Log-APX-complete, SCSN is Log-APX-hard for an alphabet of size $n+m+2$.
    However, by Theorem 1 in reference \cite{vassilevska2005explicit}, which proves that larger alphabet sizes can be encoded in polynomial time in a binary alphabet, SCSN is Log-APX-hard even for binary alphabets, showing that SCSN is Log-APX-hard for any alphabet, provided that the number of negative strings is allowed to be polynomial in the number of positive strings, completing the proof.
\end{proof}

\begin{corollary}
SCSN is Log-APX-complete.
\end{corollary}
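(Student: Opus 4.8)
The plan is simply to assemble the two halves that together constitute Log-APX-completeness, namely membership in Log-APX and Log-APX-hardness. The hardness direction is already in hand: it is exactly the content of Theorem \ref{thm:SCSN}, which exhibits an L-reduction from minimum set cover (shown to be Log-APX-complete in \cite{lund1994hardness}) to SCSN, and which remains valid over a binary alphabet via the alphabet-collapsing construction of \cite{vassilevska2005explicit}. So the only thing left to record is membership.

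For membership, I would invoke existing results rather than re-prove anything. The Group-Merge algorithm of \cite{li1990towards}, extended to accommodate negative strings in \cite{jiang1994approximating}, takes a feasible SCSN instance and outputs a common superstring avoiding all forbidden substrings whose length is within an $O(\log \eta)$ factor of the optimum, running in time polynomial in the total input length. Since in our formulation the number of negative strings is $\mathrm{poly}(\eta)$, the total input length is still polynomial in $\eta$, so the algorithm runs in polynomial time and witnesses $\text{SCSN} \in \text{Log-APX}$. Combining this with Theorem \ref{thm:SCSN} immediately yields the corollary.

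The only place where any care is needed — and the closest thing to an obstacle — is confirming that the logarithmic guarantee is genuinely unaffected by the presence of polynomially many negative strings: one must check that any merge step of Group-Merge that would create a forbidden substring can simply be skipped without destroying feasibility (a valid superstring always exists for the instances in question, e.g.\ an admissible concatenation of the positive strings) and that the approximation analysis is otherwise untouched. Because \cite{jiang1994approximating} states the $O(\log n)$ bound for SCSN directly, this is essentially immediate, and no argument beyond the citation is required.
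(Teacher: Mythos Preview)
Your proposal is correct and follows essentially the same approach as the paper: invoke the log-approximation algorithm of \cite{jiang1994approximating} for membership in Log-APX and combine it with the hardness result of Theorem~\ref{thm:SCSN}. The paper's proof is simply a terser version of what you wrote, without the extra commentary on Group-Merge and feasibility.
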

\begin{proof}
Recall the existence of a log-approximation algorithm \cite{jiang1994approximating}.
Combined with Log-APX-hardness, this implies that SCSN is Log-APX-complete.
\end{proof}

\subsection{SCSNW reduction from minimum ones 3SAT}
\begin{theorem}[SCSNW is NPOPB-hard]
Given set of strings $s_1, \ldots, s_\eta$ and set of negative strings $t_1, \ldots, t_{poly(\eta)}$ with letters from $\Sigma \cup \{?\}$, optimizing for the shortest string $S$ with letters from $\Sigma$ that is a superstring of all $s_i$, but contains no $t_j$ as a substring, where each $?$ can match any letter of $\Sigma$, is NPOPB-hard.
\label{thm:SCSNW}
\end{theorem}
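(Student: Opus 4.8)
The plan is to reduce from Minimum Ones 3SAT, where we are given a satisfiable 3-CNF formula $\phi$ on variables $x_1,\ldots,x_N$ and we must find a satisfying assignment minimizing the number of variables set to true; this problem is NPOPB-complete (indeed NPOPB-hard with optimum polynomially bounded by $N$), which is exactly the class we want to hit. The target string $S$ should have length linear in the number of true variables plus a fixed ``background'' cost, so the number of true variables is recovered (up to the usual L-/PB-reduction normalization) from $|S|$. The overall architecture mirrors the two previous reductions: use the Frame Gadget of the SCSN proof to pin down $S = b?\cdots?e$ and to disallow arbitrary bounded-length prefixes/suffixes in polynomial time (Lemma~\ref{lem:dispresuf} still applies verbatim, since wildcards only make it easier to list the offending patterns), then build one gadget per structural requirement.

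The key steps, in order, are as follows. First, a \emph{variable-assignment gadget}: for each variable $x_i$ introduce a positive string that, under the frame and a family of negative strings, is forced to extend to one of two canonical forms, a ``short'' block encoding $x_i=\text{false}$ and a ``long'' block encoding $x_i=\text{true}$, with the long block longer than the short one by a fixed amount $L$ (chosen large, $L = \mathrm{poly}(N,m)$, so that a single extra true variable dominates all lower-order terms and the reduction is an L-reduction / PB-reduction after normalizing by $L$). This reuses the ``force each additional unit of cost to correspond to a long string'' strategy from the SCSW review. Second, a \emph{clause-satisfaction gadget}: for each clause $C_j = (\ell_{j,1}\vee \ell_{j,2}\vee \ell_{j,3})$ introduce a positive string with wildcards that can collapse into the assignment block of \emph{any} literal that is satisfied (reusing the ``wildcards to collapse many input strings into one section'' strategy from SCSW), together with negative strings that make it impossible for that clause string to be absorbed anywhere if all three literals are false. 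Concretely, the clause string carries a short signature that matches a sub-pattern appearing only inside the true-block of $x_i$ (if $\ell_{j,k}=x_i$) or only inside the false-block of $x_i$ (if $\ell_{j,k}=\neg x_i$); overlapping it there costs nothing, and the negative strings forbid it from overlapping anything else or from sitting in its own fresh region. Third, the \emph{consistency glue}: negative strings enforcing that the $N$ variable blocks appear in a fixed order $b\,B_1\,B_2\cdots B_N\,e$ (so there is no freedom to reorder or duplicate blocks), using the same ``disallow $X?$ except for a short whitelist'' technique as the Item and Set Gadgets. Finally, the correctness argument: given a satisfying assignment with $t$ true variables, concatenating the blocks in order and overlapping each clause string into a satisfied literal's block yields a superstring of length $\text{(fixed background)} + tL$ containing no negative string; conversely, any valid superstring can be polynomially normalized to this canonical shape, from which a satisfying assignment of value $\le (|S| - \text{background})/L$ is read off, giving the PB-reduction inequalities.

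I expect the main obstacle to be the clause-satisfaction gadget — specifically, ensuring that a clause string has \emph{nowhere to go} when all its literals are false, while costing zero when at least one literal is true, all without accidentally creating a cheap ``escape route'' through the wildcards. Wildcards cut both ways here: they let the clause string collapse into a satisfied literal's block, but they also threaten to let it collapse into an \emph{unsatisfied} literal's block or into the frame/glue regions, so the negative-string machinery must precisely fingerprint the interior of the true-block versus the false-block and forbid every spurious embedding. Getting the block lengths, the signature patterns, and the (polynomially many) negative strings to interlock so that the \emph{only} valid superstrings are the intended canonical ones — and that the intended one always exists when $\phi$ is satisfiable — is the delicate part; the frame, ordering, and ``long block'' accounting are then routine adaptations of gadgets already established above. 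Once the gadget is in place, invoking the large-alphabet-to-binary encoding of \cite{vassilevska2005explicit} extends the result to any fixed alphabet, exactly as in the proof of Theorem~\ref{thm:SCSN}.
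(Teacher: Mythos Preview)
Your reduction is from the right problem and the high-level accounting (fixed background plus $\Theta(L)$ per true variable, then normalize) matches the paper. The architectures diverge, though, precisely at the point you flag as the main obstacle, and the paper's choice there is what makes the proof go through cleanly.

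You encode each clause as a \emph{positive} string that must be absorbed into some satisfied literal's block, and then worry (rightly) about blocking every spurious embedding. The difficulty is structural: a positive string with wildcards asserts a conjunction of equalities at its non-wildcard positions, whereas a clause is a disjunction; there is no single positive pattern that says ``at least one of positions $i,j,k$ carries the right letter,'' so your sketch of ``a short signature that matches a sub-pattern appearing only inside the [appropriate] block'' does not yet explain how one string can target three different blocks while being excluded from all three when they are set the wrong way. The paper sidesteps this entirely by making each clause a \emph{negative} string. The superstring carries a single fixed-length variable region in which slot $i$ is forced (by negative strings) to be either $A$ (false) or $T$ (true); the clause $\ell_1\vee\ell_2\vee\ell_3$ then becomes the one forbidden pattern $BX\,?\cdots?\,$ that pins the three relevant slots to the unique all-literals-falsified combination. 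A clause is exactly one forbidden partial assignment, so one wildcarded negative string captures it with no absorption bookkeeping and no escape-route analysis. The per-true-variable penalty then comes not from long-versus-short blocks but from positive variable strings $X\,?^{\,i-1}A\,?^{\,n-i}X$ that sit for free inside the variable region when slot $i=A$ yet must be pushed out past a separator $R$ (with overlaps between pushed-out copies bounded by further negative strings) when slot $i=T$, costing $\Theta(n)$ each. With this design the whole construction already lives over $\{A,T,G,C\}$, so the alphabet re-encoding step you propose at the end is unnecessary.
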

We will prove this theorem by reduction from min ones 3SAT (or Distinguished Ones 3SAT), which is NPOPB-complete \cite{kann1994polynomially}.
Our strategy will be to use a frame gadget to force all clause gadgets to overlap a particular section of the string consisting of variables that can be set true or false.
Then, using the variable gadget, we force each variable set to true to cause a large penalty by pushing a substing onto the end of the superstring.
In the following, we assign positive strings by ``$s_i =$'' and negative strings by ``$t_i =$''.

Additionally, we choose here the alphabet $\Sigma = \{A, T, G, C\}$ to match the bases in the human genome.
In the following gadgets, we also use the notation $B = GC , X = C^n G^n, R = (GAC)^n$ for the sake clarity and brevity.

\begin{gadget}{Frame gadget}
    Forces clause gadgets to overlap and variable gadgets not in the variable region to not overlap.
\end{gadget}
\begin{proof}[Design]
    \begin{align*}
        \textrm{Frame gadget } & \left\{
        \begin{array}{r l }
            s_{f_1} &=   B   X \overbrace{ ? \cdots \cdots \cdots \cdots \cdots ?}^{n \textrm{ variable slots}} X  R X \\
            t_{f_2} &=  \textrm{$?$} B  \\
            \underbracket{t_{f_{ijk}}}_{\forall i \in [1,3n^2], j \in [i+1,i+n]} & = R \overbrace{ ? \cdots ? \underbracket{A}_{i} ? \cdots ? \underbracket{A}_{j} ? \cdots ? }^{3n^2+n \textrm{ potential variable slots}} \\
            \underbracket{t_{v_{i,C}}}_{\forall i \in [1,n]} & = BX \overbrace{ ? \cdots ? \underbracket{C}_{i} ? \cdots ? } ^{n \textrm{ variable slots}} X
            \\
            \underbracket{t_{v_{i,G}}}_{\forall i \in [1,n]} & = BX \overbrace{ ? \cdots ? \underbracket{G}_{i} ? \cdots ? } ^{n \textrm{ variable slots}} X \\
        \end{array} \right.
    \end{align*}
    The $s_{f_1}$ string specifies the locations for the $n$ variable set variables, and the $t_{v_i,C}$ and $t_{v_i,G}$ negative strings ensure that those variable locations are either $T$ or $A$.
    The $t_{f_2}$ negative string forces the superstring to start with $B$, constraining where strings can go in the superstring.
    The $t_{f_{ijk}}$ negative strings ensure that no two variable gadgets can overlap except through their respective $X$ strings if they are to the right of $R$.
\end{proof}

\begin{gadget}{Variable gadget}
    Forces any set variable corresponding to push this gadget out to the end of the superstring.
\end{gadget}
\begin{proof}[Design]
    \begin{align*}
        \textrm{Variable gadget } & \left\{
        \begin{array}{l l }
            \underbracket{s_{v_{i}}}_{\forall i \in [1,n]} = X \overbrace{ ? \cdots ? \underbracket{A}_{i} ? \cdots ? } ^{n \textrm{ variable slots}} X \\
        \end{array} \right. 
    \end{align*}
    If this variable gadget is in the variable region between $B$ and $R$ in the frame, then the corresponding variable must be set to false.
    However, if the corresponding variable is set to true, then the entire gadget must be pushed over to the right of $R$, and cannot overlap except maximally by overlapping their $X$ regions.
    Thus, each additional variable set to true costs an additional $3n$ characters to the length of the string.
\end{proof}

\begin{gadget}{Clause gadget}
    Requires the variable section of the superstring to be set matching the clauses in the min ones 3SAT problem.
\end{gadget}
\begin{proof}[Design]
    \begin{align*}
        \textrm{Clause gadget } & \left\{
        \begin{array}{l l }
            \underbracket{t_{c}}_{c = v_i \vee v_j \vee \neg v_k} = B X \overbrace{ ? \cdots ? \underbracket{A}_{i} ? \cdots ? \underbracket{A}_{j} ? \cdots ? \underbracket{T}_{k} ? \cdots ? } ^{n \textrm{ variable slots}} \\
            \textrm{(positions are $T$ if negated in clause and $A$ otherwise)}
        \end{array} \right.
    \end{align*}
    For each clause, we create a negative string with all positions set to the opposite of what we want in the variable region. Thus, we disallow having all variables being the opposite of what would be needed to satisfy the clause. Thus, at least one of the variables must satisfy the clause, so all clauses with these gadgets must be satisfied. This is what forces some of the variables to be set to true.
\end{proof}

\begin{proof}[Proof of Theorem \ref{thm:SCSNW}]
    For any instance of min ones 3SAT, convert it to an instance of SCSNW with alphabet size $4$ by the gadgets described in this section.
    Construction is polynomial and takes $O(n^5)$ operations, most of which are used up constructing the negative strings of the frame gadget.
    If there is a Min Ones solution of weight $W$, then the corresponding SCSNW problem has a solution string of length $2 + 2n + n + 2n + 3n + 3nW + n = 2 + 9n + 3nW$.
    For any solution to the SCSNW problem, one gets a solution to min ones 3SAT by simply reading off the variable locations, and the weight of that solution is no more than $W$ given a superstring of length $2 + 9n + 3nW$.
    Note that by omitting some of the variable strings, this reduction also works for minimum distinguished ones 3SAT.
    After normalizing the SCSNW objective by $n$ (or equivalently the length of the longest input string), this reduction is an L-reduction.
    As min ones 3SAT (or min distinguished ones 3SAT) is NPOPB-hard, so thus must be SCSNW, completing the proof.
\end{proof}

As an aside, one might attempt to apply this reduction to SCSN, given that Lemma \ref{lem:dispresuf} can be generalized to allow wildcards in arbitrary positions in SCSN.
That would of course lead to contradictory results given that SCSN is known to be in Log-APX, and would imply an error in this proof.
However, note that this proof of SCSNW hardness requires access to $poly(n)$ wildcards per string and the proof of Lemma \ref{lem:dispresuf} can only be generalized to allow a constant number of wildcards per string (otherwise, we would need an exponential number of negative strings in SCSN).
Thus, this reduction cannot be used for SCSN, and SCSNW is provably harder than SCSN.

\section{Discussion}
We reviewed the complexity of SCS and variants depending on whether negative strings and wildcards were allowed, and built on those proofs to get new hardness results: SCSN is Log-APX-hard and therefore Log-APX-complete and SCSNW is NPOPB-hard (Table \ref{tab:nowknown} in intro).
We conjecture that SCSNW is in NPOPB if there exists a feasible solution, which would imply NPOPB-completeness, but this is nontrivial to show.
Future work could include proving completeness results for SCSW and SCSNW. 

\section{Acknowledgments}
Y.W.Y. is supported by a Hertz Foundation fellowship.
The author thanks Erik Demaine, Jayson Lynch, Sarah Eisenstat, and the entire Fall 2014 MIT 6.890 class for insightful conversations. Sarah Eisenstat is especially acknowledged for finding prior results in the literature that the author missed.
This manuscript was originally conceived as a final project for the Fall 2014 MIT 6.890 class \textit{Algorithmic Lower Bounds: Fun with Hard Proofs}, taught by Erik Demaine.

\bibliographystyle{amsalpha}
\bibliography{main}

\end{document}